\documentclass[11pt,journal,twocolumn]{IEEEtran}

\usepackage{amsmath,amssymb,amsfonts,bm}
\usepackage{amsthm}

\usepackage{graphicx}
\usepackage{float}
\usepackage{subfigure}
\usepackage{empheq}
\usepackage{algorithm}
\usepackage{algorithmic}

\usepackage{cite}
\usepackage{color}
\usepackage{times}

\theoremstyle{plain}
\newtheorem{theorem}{Theorem}[section]

\newtheorem{corollary}[theorem]{Corollary}

\theoremstyle{definition}
\newtheorem{definition}[theorem]{Definition}

\theoremstyle{remark}
\newtheorem{remark}[theorem]{Remark}

\theoremstyle{plain}

\begin{document}

\title{Complementary Waveform Coordination for Doppler-Resilient MIMO Radar}

\author{Maksim Antonik,  Wenbing Dang, Ali Pezeshki, Stephen D. Howard, and William Moran
\thanks{This work is supported in part by the AFOSR under award FA9550-24-1-0326. M. Antonik and A. Pezeshki are with Colorado State University. W. Dang is with Nvidia. S. D. Howard is with DSTG, Australia. W. Moran is with Melbourne University, Australia.}}
\maketitle

\begin{abstract}
Complementary waveform libraries yield impulse-like aggregate delay
responses at zero Doppler, but pulse-dependent Doppler phases degrade
their sidelobe cancellation. This paper develops a linear-algebraic
formulation for coordinating a (D)-ary transmission schedule and
slow-time receive weights for paraunitary MIMO waveform libraries.
Using a modal decomposition of the CPI ambiguity matrix,
prescribed-order Doppler-nulling conditions for the (D-1) nonzero
modes are expressed as homogeneous linear constraints on the receive
weights for a fixed schedule. This representation gives the
SNR-optimal fixed-schedule weights by subspace projection and reduces
the remaining design to a discrete schedule search. Degrees-of-freedom
and modal-growth analyses relate the design to the CPI length, null
order, and waveform dimension. Numerical examples compare the
resulting designs with established binary constructions and demonstrate
Doppler suppression across the full ambiguity matrix of a four-waveform
configuration. 
\end{abstract}

\begin{IEEEkeywords}
complementary waveforms, Doppler resilience, MIMO radar,
cross-ambiguity matrix, range sidelobe suppression,
waveform coordination, paraunitary filter banks
\end{IEEEkeywords}

\section{Introduction}

Phase coding is widely used in radar pulse compression to achieve fine range resolution while maintaining sufficient transmitted energy \cite{Levanon-book}. A prominent class of phase-coded waveforms is based on Golay complementary sequences \cite{Golay-IRE61}. The aperiodic autocorrelation functions of the two sequences in a Golay pair sum to zero at every nonzero delay. Consequently, when waveforms coded by the pair are transmitted over separate pulse repetition intervals (PRIs) and their matched-filter outputs are coherently combined, the aggregate response is impulse-like in delay along the zero-Doppler axis. This complementary property was subsequently generalized from sequence pairs to larger complementary sets \cite{Tseng-IT72}.

Colocated MIMO radar uses multiple transmit channels and waveform diversity to provide additional spatial degrees of freedom \cite{Stoica-SPMSept07}. This multichannel setting naturally motivates waveform families with both complementary autocorrelation and cross-channel orthogonality. Complete complementary codes (CCCs) were originally developed for synchronous multiuser spread-spectrum communications \cite{Suehiro-IT88}. They extend complementarity to families of sequence sets whose summed autocorrelation sidelobes and inter-set cross-correlations vanish \cite{Han-IT11}. Specifically, the summed aperiodic autocorrelation sidelobes vanish within each complementary set, while the summed cross-correlations between distinct sets vanish at every delay. These properties make CCCs a natural foundation for MIMO radar waveform design, where an impulse-like aggregate delay response must be maintained while separating multiple transmit channels.

The cancellation provided by complementary waveforms is highly sensitive to Doppler because the pulse-dependent phase progression disrupts the coherent sums required by the complementary identities. Early Doppler-resilient designs retained matched receive processing and modified only the slow-time ordering of the transmitted waveforms. The Prouhet--Thue--Morse (PTM) sequence \cite{Allouche-SETA98} was used to schedule Golay-coded pulses, imposing a high-order spectral null at zero Doppler \cite{Pezeshki-IT08}. Related transmit-only constructions employed first-order Reed--Muller codes, oversampled PTM sequences, and generalized Thue--Morse or equal-sums-of-powers sequences to control the locations and orders of Doppler nulls \cite{Pezeshki-Asil07,Pezeshki-WDD09,Nguyen-IET16}. Generalized PTM scheduling was subsequently applied to CCCs in MIMO radar \cite{Tang_PTM}.

Another line of work jointly designed the transmission schedule and receive weights. In our prior work, we introduced the \((P,Q)\) framework, which coordinates these quantities to create high-order Doppler nulls and suppress range sidelobes over an interval about zero Doppler \cite{Dang_Asilomar11,Dang_arXiv20}. Subsequent work reformulated the binary Golay problem using semidefinite optimization, null-space methods, and SVD-based methods for sidelobe suppression over prescribed Doppler intervals \cite{Wu-SPL20,Wang-arXiv21,Wang-IET23}.

Building on the general \(D\)-ary modal decomposition developed in our prior work \cite{Dang_arXiv20}, this paper systematizes the joint schedule-and-weight design for complementary waveform sets and paraunitary MIMO waveform matrices. For an arbitrary (D)-ary schedule, the moment-nulling conditions for all nonzero modes are assembled into a single linear system. This formulation provides the optimal receive weights for a fixed schedule and reduces the joint design to a discrete optimization over the schedule. We further characterize the available slow-time degrees of freedom and derive a bound relating Doppler-clearance width to the null order, CPI length, and waveform dimension.

\section{Signal Model}
\label{sec:signal_model}
We consider a colocated MIMO radar with \(D\) transmit channels. During each PRI, the \(D\) transmitters simultaneously radiate the components of a \(D\)-channel waveform vector selected from a common library. The receiver applies the corresponding \(D\) matched filters, producing a \(D\times D\) matrix of auto- and cross-responses.

\subsection{Paraunitary Waveform Library}

We represent the transmit code library as a causal $D\times D$ matrix FIR system
\begin{equation}
\mathbf{S}(z) \triangleq \sum_{\ell=0}^{L-1}\mathbf{S}[\ell]\,z^{-\ell},
\qquad \mathbf{S}[\ell]\in\mathbb{C}^{D\times D}.
\label{eq:Sz_def}
\end{equation}
The system $\mathbf{S}(z)$ is \emph{paraunitary} if
$\mathbf{S}(z)\mathbf{S}^H(z^{-1}) = c\,\mathbf{I}_D$ for some constant
$c>0$. Expanding this condition in powers of $z$ yields the time-domain
correlation identity
\begin{equation}
\sum_{\ell=0}^{L-1}\mathbf{S}[\ell]\mathbf{S}^{H}[\ell-k]
=
c\,\mathbf{I}_D\,\delta[k],
\qquad k=-(L-1),\ldots,(L-1),
\label{eq:PU_time}
\end{equation}
where $\delta[k]$ is the Kronecker delta. Paraunitary matrix polynomial
systems arise naturally in multirate filter-bank theory
\cite{VaidyanathanBook}; several constructions of complementary waveform
libraries based on this framework have been proposed in the literature
(e.g., \cite{BudiSinSpasojevic}). The specific construction used in the
numerical studies of this paper is summarized in
Appendix~\ref{app:PUconstruction}; the analysis requires only that
$\mathbf{S}(z)$ satisfy \eqref{eq:PU_time}.

Let $\mathbf{s}_d[\ell]\in\mathbb{C}^D$ denote the $d$th column of
$\mathbf{S}[\ell]$, and define the per-column correlation matrices
\begin{equation}
\mathbf{C}_d[k]
\triangleq
\sum_{\ell=0}^{L-1}\mathbf{s}_d[\ell]\mathbf{s}_d^{H}[\ell-k].
\label{eq:Cd_def}
\end{equation}
Decomposing \eqref{eq:PU_time} by columns shows that
\begin{equation}
\sum_{d=0}^{D-1}\mathbf{C}_d[k]
=
c\,\mathbf{I}_D\,\delta[k].
\label{eq:comp_set}
\end{equation}
The FIR coefficients are taken to be unimodular,
$|S_{i,d}[\ell]|=1$ for all $i,d,\ell$, so that counting coefficients
gives a total energy of $D^2L$. Evaluating \eqref{eq:PU_time} at $k=0$
then yields $c=DL$.

\subsection{CPI Transmission and Ambiguity}

Each column $d$ of the library defines a $D$-channel transmit waveform vector
$\mathbf{x}_d(t)=[\,x_{0,d}(t),\ldots,x_{D-1,d}(t)\,]^T$, whose waveform on
transmit channel $i$ is
\begin{equation}
x_{i,d}(t)
=
\sum_{\ell=0}^{L-1}
S_{i,d}[\ell]\,\Omega(t-\ell T_c),
\qquad i,d=0,\ldots,D-1.
\label{eq:continuous_waveform}
\end{equation}
The chip waveform $\Omega(t)$ has duration $T_c$, unit energy,
$\int_{-T_c/2}^{T_c/2}|\Omega(t)|^2\,dt=1$, and mutually orthogonal shifts
at integer multiples of $T_c$.

Let $P=\{p_n\}_{n=0}^{N-1}$, $p_n\in\{0,\ldots,D-1\}$,
be a $D$-ary slow-time sequence that selects which column waveform is
transmitted during each PRI.  During the $n$th PRI the scheduling sequence selects column $d = p_n$, and
all $D$ channels simultaneously radiate
$\mathbf{x}_{p_n}(t) = \mathbf{x}_d(t)\big|_{d=p_n}$.

We consider a single point target with delay $\tau$ and Doppler frequency
$\nu$. The $n$th-PRI ambiguity matrix is
\begin{equation}
\boldsymbol{\chi}_n(\tau,\nu)
=
\int \mathbf{x}_{p_n}(t-nT)\,
\mathbf{x}_{p_n}^{H}(t-\tau-nT)\,e^{j\nu t}\,dt.
\label{eq:chi_n_cont}
\end{equation}
Define the intra-PRI fast time $t' = t - nT$, so that
\[
e^{j\nu t} = e^{j\nu nT}\,e^{j\nu t'}.
\]
Since $|t'| \leq LT_c$ and $|\nu|\,LT_c \ll 1$ (stop-and-hop
approximation~\cite{Levanon-book}), $e^{j\nu t'} \approx 1$ and thus
$e^{j\nu t} \approx e^{j\nu nT}$.

Substituting $e^{j\nu t} \approx e^{j\nu nT}$ into
\eqref{eq:chi_n_cont} gives
\begin{equation}
\boldsymbol{\chi}_n(\tau,\nu)
=
e^{j\nu nT}
\int \mathbf{x}_{p_n}(t')\,
\mathbf{x}_{p_n}^{H}(t'-\tau)\,dt'.
\label{eq:chi_n_stopandhop}
\end{equation}
Evaluating on the chip-delay grid $\tau = kT_c$, orthogonality of
integer chip shifts reduces the integral to
$\mathbf{C}_{p_n}[k]$. Defining $\theta \triangleq \nu T$,
\begin{equation}
\boldsymbol{\chi}_n(k,\theta)
=
e^{jn\theta}\,\mathbf{C}_{p_n}[k],
\label{eq:chi_n_discrete}
\end{equation}
where $\mathbf{C}_{p_n}[k]$ is drawn from
$\{\mathbf{C}_d[k]\}_{d=0}^{D-1}$ defined in \eqref{eq:Cd_def}.

\begin{remark}[Steering Vector]
For a fixed look direction, the transmit and receive steering terms reduce to constant matched-filter gains and do not affect the delay--Doppler structure.
\end{remark}

Let $Q=\{q_n\}_{n=0}^{N-1}$ denote a complex-valued slow-time weighting
sequence applied during CPI combining. 
Combining across the CPI with weighting sequence $Q$ yields the aggregate
ambiguity matrix
\begin{equation}
\boldsymbol{\chi}(k,\theta)
=
\sum_{n=0}^{N-1} q_n\,e^{jn\theta}\,\mathbf{C}_{p_n}[k].
\label{eq:chi_cpi_discrete}
\end{equation}

\subsection{Constant-Envelope Realization via MSK}
\label{sec:msk_realization}

The ideal square-chip realization in
\eqref{eq:continuous_waveform} has abrupt phase transitions and poor
spectral confinement. These effects are avoided using the
biphase-to-quadriphase (BTQ) transformation
\cite{TaylorBlinchikoff88}, which provides a constant-envelope,
phase-continuous MSK realization.

The BTQ transformation has been shown to preserve the complementary
correlation property of single-channel Golay pairs \cite{Dong2019};
Appendix~\ref{app:MSKrealization} shows that the same property holds
for the paraunitary waveform library used here. All subsequent
waveform-level simulations use this realization:
\begin{equation}
\begin{aligned}
x_{i,d}^{\mathrm{BTQ}}(t)
&=
\sqrt{T_c}\sum_{r=0}^{L/2-1}
\Bigg[
S_{i,d}[2r+1]
\cos\!\left(\frac{\pi t}{T_c}\right)
\Omega\bigl(t-(r+1)T_c\bigr)
\\
&\quad
-jS_{i,d}[2r]
\sin\!\left(\frac{\pi t}{T_c}\right)
\Omega\!\left(t-\left(r+\frac12\right)T_c\right)
\Bigg].
\end{aligned}
\label{eq:btq_waveform}
\end{equation}

\section{Doppler-Resilient Design in the Mode Domain}
\label{sec:mode_design}

\subsection{Mode-Domain Representation of the CPI Ambiguity}

Since $p_n\in\{0,\dots,D-1\}$, the per-PRI correlation matrix can be written as
\begin{equation}
\mathbf{C}_{p_n}[k]
=
\sum_{d=0}^{D-1}\delta_{d,p_n}\,\mathbf{C}_{d}[k].
\label{eq:Cpn_select}
\end{equation}

And because the column index $d$ takes values in the finite cyclic group
$\mathbb{Z}_D$, the Kronecker selector $\delta_{d,p_n}$ admits a Fourier
expansion in the orthogonal character basis $\{\omega^{rd}\}_{r=0}^{D-1}$.
Using the discrete Fourier transform (DFT) selector identity
\begin{equation}
\delta_{d,p_n}
=
\frac{1}{D}\sum_{r=0}^{D-1}\omega^{r(p_n-d)},
\qquad \omega=e^{j2\pi/D},
\label{eq:selector_identity}
\end{equation}
and reordering finite sums yields
\begin{equation}
\mathbf{C}_{p_n}[k]
=
\frac{1}{D}\sum_{r=0}^{D-1}\omega^{r p_n}
\Big(\sum_{d=0}^{D-1}\omega^{-rd}\mathbf{C}_{d}[k]\Big).
\label{eq:Cpn_modes}
\end{equation}

\begin{definition}[Mode-Domain Quantities]
\label{def:Delta_r}
For $r=0,\ldots,D-1$, define the mode correlation matrices
and slow-time mode responses as
\begin{align}
\boldsymbol{\Delta}_r[k]
&\triangleq
\sum_{d=0}^{D-1}\omega^{-rd}\,\mathbf{C}_{d}[k],
\label{eq:Delta_def}\\
S_r(\theta)
&\triangleq
\sum_{n=0}^{N-1} q_n\,\omega^{r p_n}e^{jn\theta}.
\label{eq:Sr_def}
\end{align}
\end{definition}

Substituting \eqref{eq:Cpn_modes} into
\eqref{eq:chi_cpi_discrete} and reordering the finite sums gives
\begin{align}
\boldsymbol{\chi}(k,\theta)
&=
\frac{1}{D}\sum_{r=0}^{D-1}
\Bigg(
\sum_{n=0}^{N-1}
q_n\omega^{r p_n}e^{jn\theta}
\Bigg)
\Bigg(
\sum_{d=0}^{D-1}
\omega^{-rd}\mathbf{C}_d[k]
\Bigg)
\nonumber\\
&=
\frac{1}{D}\sum_{r=0}^{D-1}
S_r(\theta)\,\boldsymbol{\Delta}_r[k].
\label{eq:chi_modes_compact}
\end{align}

From \eqref{eq:Delta_def} and the complementary-set identity
\eqref{eq:comp_set}, the DC mode satisfies
\begin{equation}
\boldsymbol{\Delta}_0[k]
=
DL\,\mathbf{I}_D\,\delta[k].
\label{eq:Delta0_impulse}
\end{equation}
Substituting \eqref{eq:Delta0_impulse} into
\eqref{eq:chi_modes_compact} yields the central mode-domain
decomposition
\begin{equation}
\boldsymbol{\chi}(k,\theta)
=
L\,S_0(\theta)\,\mathbf{I}_D\,\delta[k]
+
\frac{1}{D}\sum_{r=1}^{D-1}
S_r(\theta)\,\boldsymbol{\Delta}_r[k].
\label{eq:chi_split}
\end{equation}

The $r=0$ mode supplies the delay impulse imposed by paraunitarity,
whereas all nonzero-delay range--Doppler coupling is confined to the
$D-1$ nonzero modes. For a fixed waveform library, Doppler-resilient
sidelobe suppression therefore reduces to shaping the slow-time
responses $\{S_r(\theta)\}_{r=1}^{D-1}$.

\begin{remark}[Special Cases]
\label{rem:special_cases}
For $D=2$, \eqref{eq:chi_split} recovers the binary $(P,Q)$
framework of \cite{Dang_arXiv20}. For $D>2$, the design must
suppress $D-1$ nonzero modes simultaneously.
\end{remark}

\subsection{Small-Doppler Behavior and Moment Conditions}

\begin{definition}[Slow-Time Moments]
\label{def:moments}
For each mode $r$ and order $m$, define
\begin{equation}
\mu_{r,m}
\triangleq
\sum_{n=0}^{N-1}q_n\,\omega^{r p_n}n^m,
\qquad
r=0,\ldots,D-1,\quad m=0,1,\ldots
\label{eq:moments_def}
\end{equation}
\end{definition}

Expanding \eqref{eq:Sr_def} about $\theta=0$ and using
\eqref{eq:moments_def} gives
\begin{equation}
S_r(\theta)
=
\sum_{m=0}^{\infty}
\frac{(j\theta)^m}{m!}\,\mu_{r,m}.
\label{eq:Sr_moment_expansion}
\end{equation}

Since \eqref{eq:chi_split} confines all nonzero-delay terms to
the modes $r=1,\ldots,D-1$, we impose the order-$M$
moment-nulling conditions
\begin{equation}
\mu_{r,m}=0,
\qquad
r=1,\ldots,D-1,\quad m=0,\ldots,M.
\label{eq:moment_nulls}
\end{equation}
Under \eqref{eq:moment_nulls}, \eqref{eq:Sr_moment_expansion}
gives
\[
S_r(\theta)=\mathcal{O}(\theta^{M+1}),
\qquad r=1,\ldots,D-1,
\]
so the associated nonzero-delay contributions are locally
suppressed to order $M+1$ in Doppler.

\section{Slow-Time Design}
\label{sec:design}

\subsection{Output SNR}

For a point target with channel response
$\mathbf h\in\mathbb C^D$, the matched-filter output at the origin is
\begin{equation}
\mathbf z(0,0)
=
L(\mathbf 1_N^T\mathbf q)\mathbf h
+
\sum_{n=0}^{N-1}q_n\mathbf w_n(0),
\label{eq:noisy_mainlobe}
\end{equation}
where $\mathbf w_n(0)$ denotes the filtered receiver noise from the
$n$th PRI. Assuming
$\mathbb E\|\mathbf h\|_2^2=D\sigma_b^2$, the aggregate mainlobe
signal power is
\begin{align}
P_s
&=
\mathbb E\!\left[
\left\|
L(\mathbf 1_N^T\mathbf q)\mathbf h
\right\|_2^2
\right]
\nonumber\\
&=
DL^2\sigma_b^2
\left|\mathbf 1_N^T\mathbf q\right|^2.
\label{eq:mainlobe_signal_power}
\end{align}

For temporally white receiver noise with power $N_0$, each transmitted
vector waveform has energy $DL$, and hence
\begin{equation}
\mathbb E\!\left[\|\mathbf w_n(0)\|_2^2\right]
=
N_0DL.
\label{eq:single_pri_noise_power}
\end{equation}
Because the PRIs do not overlap, the filtered noise terms are
uncorrelated across $n$, giving
\begin{equation}
P_w
=
\mathbb E\!\left[
\left\|
\sum_{n=0}^{N-1}q_n\mathbf w_n(0)
\right\|_2^2
\right]
=
N_0DL\|\mathbf q\|_2^2.
\label{eq:output_noise_power}
\end{equation}
The resulting output SNR is
\begin{equation}
\rho
=
\frac{P_s}{P_w}
=
\frac{L\sigma_b^2}{N_0}
\frac{
\left|\mathbf 1_N^T\mathbf q\right|^2
}{
\|\mathbf q\|_2^2
}.
\label{eq:SNR_def}
\end{equation}

Relative to uniform weighting, the SNR efficiency is
\begin{equation}
\eta_{\mathrm{SNR}}
=
\frac{
\left|\mathbf 1_N^T\mathbf q\right|^2
}{
N\|\mathbf q\|_2^2
}
\leq 1,
\label{eq:snr_efficiency}
\end{equation}
where equality holds when $\mathbf q$ is proportional to
$\mathbf 1_N$. Thus, nonuniform weighting provides the degrees of
freedom required for Doppler nulling at the cost of coherent-integration
efficiency.

\subsection{Linear Constraint Representation}

For a fixed schedule $P$, the moment-nulling conditions
\eqref{eq:moment_nulls} form a homogeneous linear system in
$\mathbf q=[\,q_0,\ldots,q_{N-1}\,]^T\in\mathbb C^N$.
For a chosen nulling order $M$, define the $(M+1)\times N$
Vandermonde moment matrix
\begin{equation}
[\mathbf V]_{m,n}=n^m,
\qquad
m=0,\ldots,M,\quad n=0,\ldots,N-1,
\label{eq:V_def}
\end{equation}
and, for each mode $r$, the schedule-dependent modulation matrix
\begin{equation}
\mathbf D_r(P)
\triangleq
\operatorname{diag}\!\left(
\omega^{r p_0},
\omega^{r p_1},
\ldots,
\omega^{r p_{N-1}}
\right).
\label{eq:Dr_def}
\end{equation}
By \eqref{eq:moments_def},
\[
\mathbf V\mathbf D_r(P)\mathbf q
=
[\,\mu_{r,0},\ldots,\mu_{r,M}\,]^T.
\]
Stacking the moment constraints for all nonzero modes therefore gives
the global system
\begin{equation}
\mathbf A(P)\mathbf q=\mathbf 0,
\qquad
\mathbf A(P)
\triangleq
\begin{bmatrix}
\mathbf V\mathbf D_1(P)\\
\vdots\\
\mathbf V\mathbf D_{D-1}(P)
\end{bmatrix}.
\label{eq:A_stack_def}
\end{equation}
Each block row of $\mathbf A(P)$ enforces the $M+1$ moment
constraints associated with one nonzero mode.

\subsection{SNR-Optimal Weights for a Fixed Schedule}

For a fixed schedule $P$, let
$\mathcal S=\mathcal N(\mathbf A(P))\subseteq\mathbb C^N$
denote the feasible subspace. Using \eqref{eq:SNR_def}, the
SNR-optimal weight design is equivalent to
\begin{equation}
\max_{\mathbf q\in\mathcal S,\;\mathbf q\neq\mathbf 0}
\frac{|\mathbf 1_N^T\mathbf q|^2}{\|\mathbf q\|_2^2}.
\label{eq:q_opt_fixedP}
\end{equation}

\begin{theorem}[SNR-Optimal Weight]
\label{thm:projection_opt}
Let $\mathbf P_{\mathcal S}$ denote the orthogonal projector onto
$\mathcal S$. Provided that
$\mathbf P_{\mathcal S}\mathbf 1_N\neq\mathbf 0$, a unit-norm
maximizer of \eqref{eq:q_opt_fixedP} is
\begin{equation}
\mathbf q^\star(P)
=
\frac{\mathbf P_{\mathcal S}\mathbf 1_N}
{\|\mathbf P_{\mathcal S}\mathbf 1_N\|_2}.
\label{eq:q_star_proj}
\end{equation}
The corresponding maximum output SNR is
\begin{equation}
\rho_{\max}(P)
=
\frac{L\sigma_b^2}{N_0}
\|\mathbf P_{\mathcal S}\mathbf 1_N\|_2^2.
\label{eq:SNR_max_proj}
\end{equation}
\end{theorem}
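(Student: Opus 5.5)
The plan is to reduce \eqref{eq:q_opt_fixedP} to a Cauchy--Schwarz bound inside the feasible subspace $\mathcal S$. Note first that $\mathbf 1_N$ is real, so $\mathbf 1_N^T\mathbf q=\mathbf 1_N^H\mathbf q=\langle \mathbf q,\mathbf 1_N\rangle$ in the standard inner product on $\mathbb C^N$. The objective is invariant to scaling of $\mathbf q$, so it suffices to maximize $|\mathbf 1_N^H\mathbf q|^2$ over unit-norm $\mathbf q\in\mathcal S$.

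\textbf{Step 1: move the projector onto $\mathbf 1_N$.} For any $\mathbf q\in\mathcal S$ we have $\mathbf q=\mathbf P_{\mathcal S}\mathbf q$. Since $\mathbf P_{\mathcal S}$ is Hermitian,
\[
\mathbf 1_N^H\mathbf q=\mathbf 1_N^H\mathbf P_{\mathcal S}\mathbf q=(\mathbf P_{\mathcal S}\mathbf 1_N)^H\mathbf q .
\]
Thus only the component $\mathbf P_{\mathcal S}\mathbf 1_N$ of the all-ones vector can be seen by feasible weights.

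\textbf{Step 2: Cauchy--Schwarz.} Applying Cauchy--Schwarz to the identity in Step 1 gives
\[
|\mathbf 1_N^T\mathbf q|^2\le \|\mathbf P_{\mathcal S}\mathbf 1_N\|_2^2\,\|\mathbf q\|_2^2 .
\]
Hence the ratio in \eqref{eq:q_opt_fixedP} is at most $\|\mathbf P_{\mathcal S}\mathbf 1_N\|_2^2$ for every nonzero $\mathbf q\in\mathcal S$.

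\textbf{Step 3: attainment.} The candidate \eqref{eq:q_star_proj} is well defined by the hypothesis $\mathbf P_{\mathcal S}\mathbf 1_N\neq\mathbf 0$. It lies in $\mathcal S$, and it has unit norm. Substituting it gives
\[
\mathbf 1_N^H\mathbf q^\star=\frac{\mathbf 1_N^H\mathbf P_{\mathcal S}\mathbf 1_N}{\|\mathbf P_{\mathcal S}\mathbf 1_N\|_2}.
\]
Using idempotence and Hermitian symmetry, $\mathbf 1_N^H\mathbf P_{\mathcal S}\mathbf 1_N=\|\mathbf P_{\mathcal S}\mathbf 1_N\|_2^2$, so $|\mathbf 1_N^T\mathbf q^\star|^2=\|\mathbf P_{\mathcal S}\mathbf 1_N\|_2^2$. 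The bound from Step 2 is therefore attained, which is exactly the equality case of Cauchy--Schwarz.

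\textbf{Step 4: the SNR.} Substituting this maximal ratio into \eqref{eq:SNR_def} yields \eqref{eq:SNR_max_proj} directly.

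The only delicate point is Step 1. One must keep track of the fact that \eqref{eq:SNR_def} uses the transpose $\mathbf 1_N^T$ rather than the conjugate transpose. This is harmless only because $\mathbf 1_N$ is real, and that is what lets the Hermitian projector be moved across the inner product.

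The hypothesis $\mathbf P_{\mathcal S}\mathbf 1_N\neq\mathbf 0$ is needed only to make \eqref{eq:q_star_proj} defined. If instead $\mathbf 1_N\perp\mathcal S$, Step 2 shows every feasible $\mathbf q$ gives zero SNR.
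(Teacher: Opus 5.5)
Your proposal is correct and follows the paper's proof: use that $\mathbf 1_N$ is real to move the Hermitian projector onto it, $|\mathbf 1_N^T\mathbf q|=|(\mathbf P_{\mathcal S}\mathbf 1_N)^H\mathbf q|$, then apply Cauchy--Schwarz within $\mathcal S$, with equality at $\mathbf q\propto\mathbf P_{\mathcal S}\mathbf 1_N$. Your explicit attainment computation via idempotence and the substitution into \eqref{eq:SNR_def} spell out steps the paper leaves implicit.
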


\begin{proof}
For any $\mathbf q\in\mathcal S$,
\[
|\mathbf 1_N^T\mathbf q|
=
|(\mathbf P_{\mathcal S}\mathbf 1_N)^H\mathbf q|
\leq
\|\mathbf P_{\mathcal S}\mathbf 1_N\|_2\|\mathbf q\|_2,
\]
with equality when $\mathbf q$ is proportional to
$\mathbf P_{\mathcal S}\mathbf 1_N$.
\end{proof}

\subsection{Design of the Column Scheduling Sequence}

Using the fixed-schedule solution \eqref{eq:q_star_proj} in
\eqref{eq:SNR_max_proj}, the remaining design problem is
\begin{equation}
P^\star
\in
\arg\max_{P\in\{0,\ldots,D-1\}^N}
\left\|
\mathbf P_{\mathcal N(\mathbf A(P))}\mathbf 1_N
\right\|_2^2,
\label{eq:P_opt_discrete}
\end{equation}
subject to $\mathbf A(P)$ having a nontrivial null space.

The schedule search in \eqref{eq:P_opt_discrete} contains $D^N$
candidates, motivating the following heuristic.

\begin{remark}[Numerical Schedule Search]
\label{rem:schedule_search}
The schedules used in Section~\ref{sec:results} were obtained by a
genetic search restricted to balanced $D$-ary sequences. The initial population
contained the periodic schedule together with randomly generated
balanced schedules. Each candidate was scored using the projection objective in
\eqref{eq:P_opt_discrete}. Before assembling the constraints,
the Vandermonde rows were orthonormalized using a QR
decomposition to improve numerical conditioning without
changing the null space. The objective was then evaluated
by an SVD of the real-valued constraint matrix described in
Section~\ref{sec:real_form}. At each generation, high-scoring schedules were retained, and new candidates were generated by crossover and mutation. The column counts of each new candidate were adjusted as needed to preserve balance. The search stopped when the best objective value ceased to improve. The
SNR-optimal weights for the resulting schedule were then computed
using \eqref{eq:q_star_proj}.
\end{remark}

\section{Numerical Illustrations}
\label{sec:results}

All waveform-level illustrations use the BTQ/MSK realization of
Section~\ref{sec:msk_realization} implemented according to
\eqref{eq:btq_waveform}. Unless otherwise stated, the simulations use
$N=64$, $L=128$ biphase coefficients (64 paired-chip intervals),
$N_s=64$ samples per interval, and 513 uniformly spaced Doppler samples over
$[-\pi,\pi]$. Correlations are evaluated at the full sample rate, with delay
reported in chip intervals.

The aggregate ambiguity matrix $\boldsymbol{\chi}(k,\theta)$ in
\eqref{eq:chi_cpi_discrete} contains $D^2$ matched-filter responses. We denote
by $\chi_{i,j}(k,\theta)$ the response of the $j$th matched filter to the
waveform transmitted on channel $i$, after weighted combination across the
$N$ PRIs. For the scalar ambiguity plots, these responses are combined
coherently as
\begin{equation}
\chi(k,\theta)
\triangleq
\sum_{i=1}^{D}\sum_{j=1}^{D}\chi_{i,j}(k,\theta).
\label{eq:scalar_caf_sum}
\end{equation}
The plots show $20\log_{10}(|\chi(k,\theta)|/|\chi(0,0)|)$ with a
$-110$~dB display floor.

\begin{figure*}[!h]
\centering

\begin{tabular}{@{}cc@{}}
\includegraphics[width=.45\textwidth]{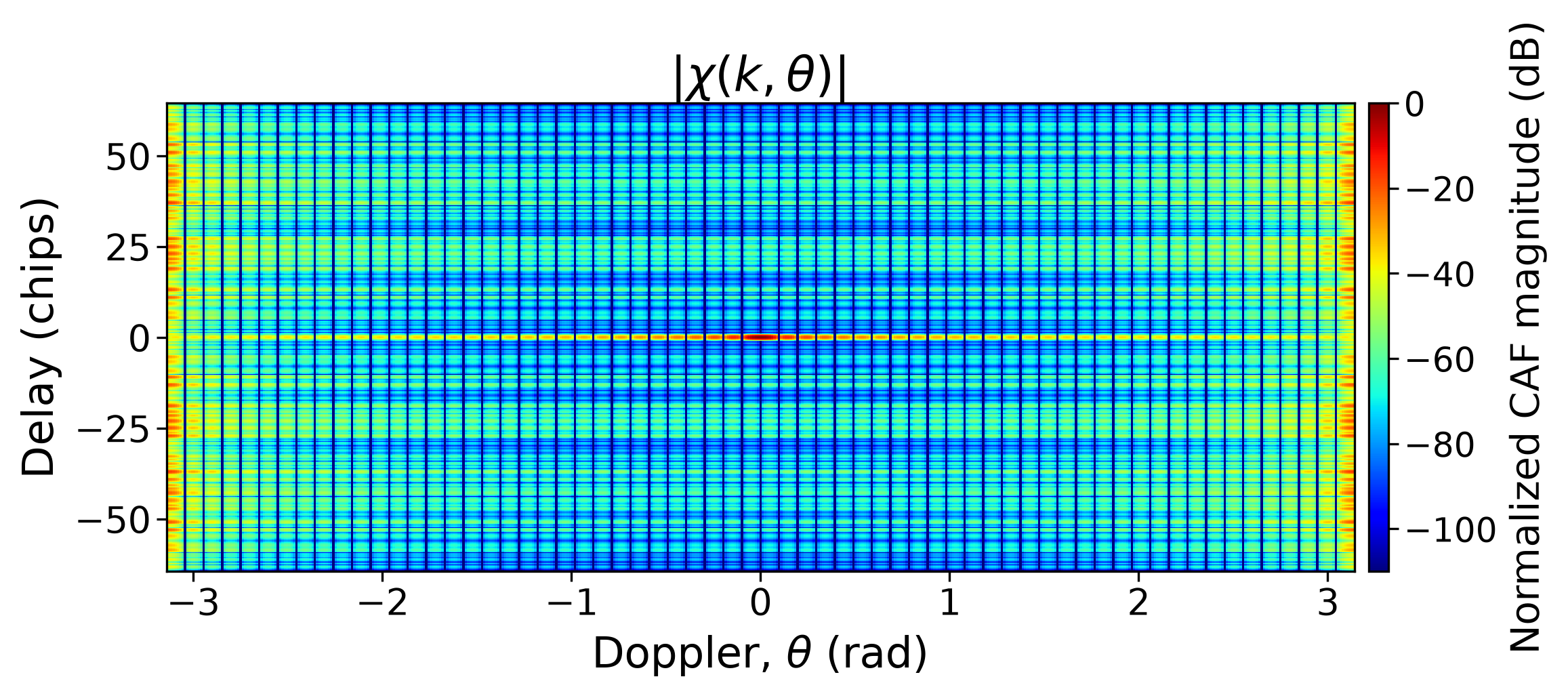} &
\includegraphics[width=.45\textwidth]{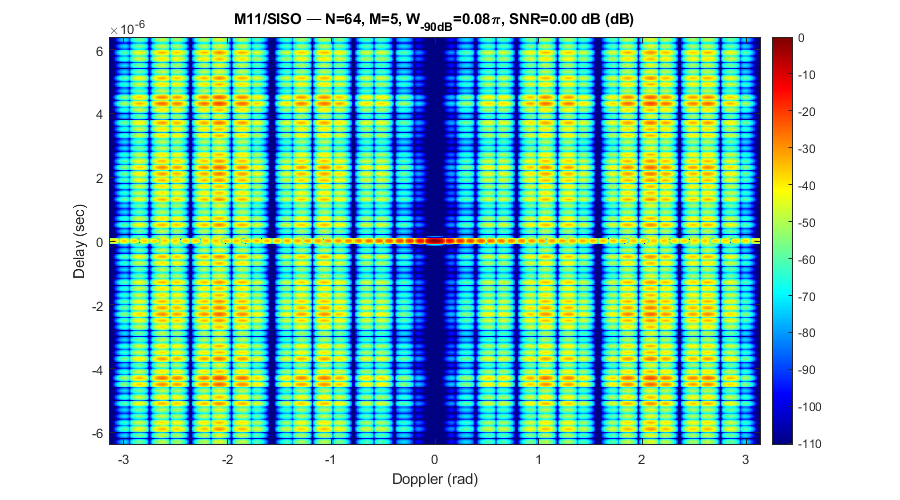} \\[-1pt]
{\small (a) Alternating ($M=0$, SNR $=0.000$ dB)} &
{\small (b) PTM ($M=5$, SNR $=0.000$ dB)} \\[1pt]
\includegraphics[width=.45\textwidth]{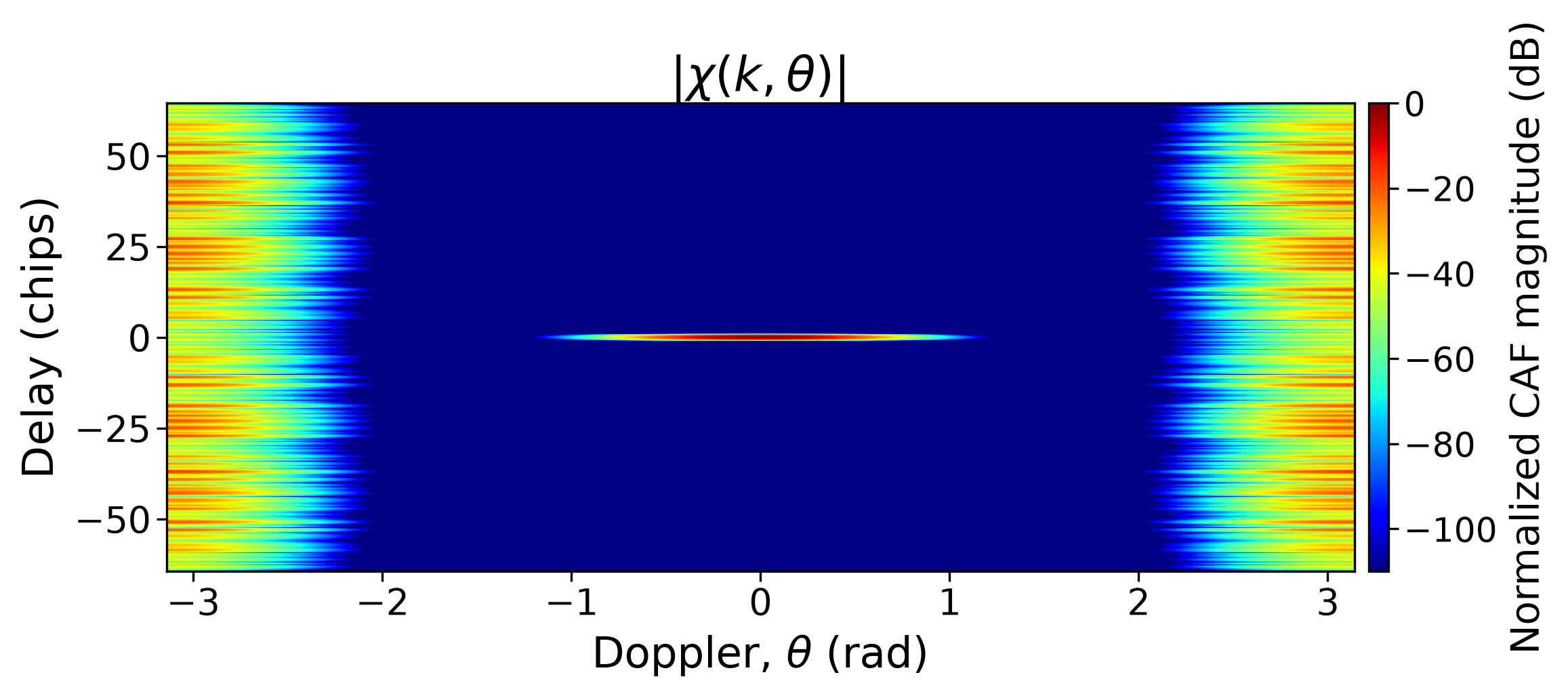} &
\includegraphics[width=.45\textwidth]{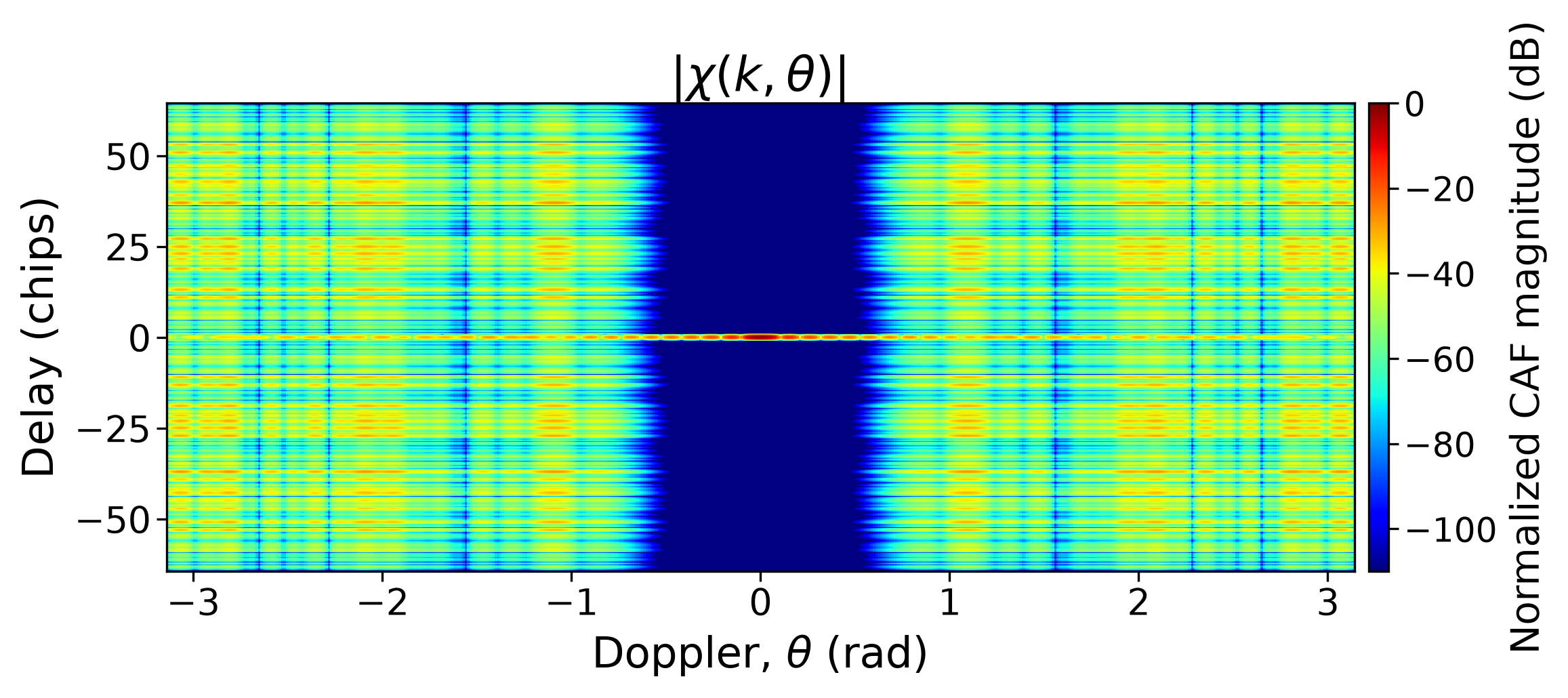} \\[-1pt]
{\small (c) Binomial ($M=62$, SNR $=-6.571$ dB)} &
{\small (d) Proposed ($M=25$, SNR $=-0.339$ dB)}
\end{tabular}
\vspace{-1mm}
\caption{Normalized CPI ambiguity magnitude $|\chi(k,\theta)|$ for the
binary case ($D=2$, $N=64$): (a) alternating, (b) PTM, (c) binomial,
and (d) the proposed design.}
\label{fig:siso_compare}
\end{figure*}

\begin{figure*}[!h]
\centering
\includegraphics[width=.85\textwidth]{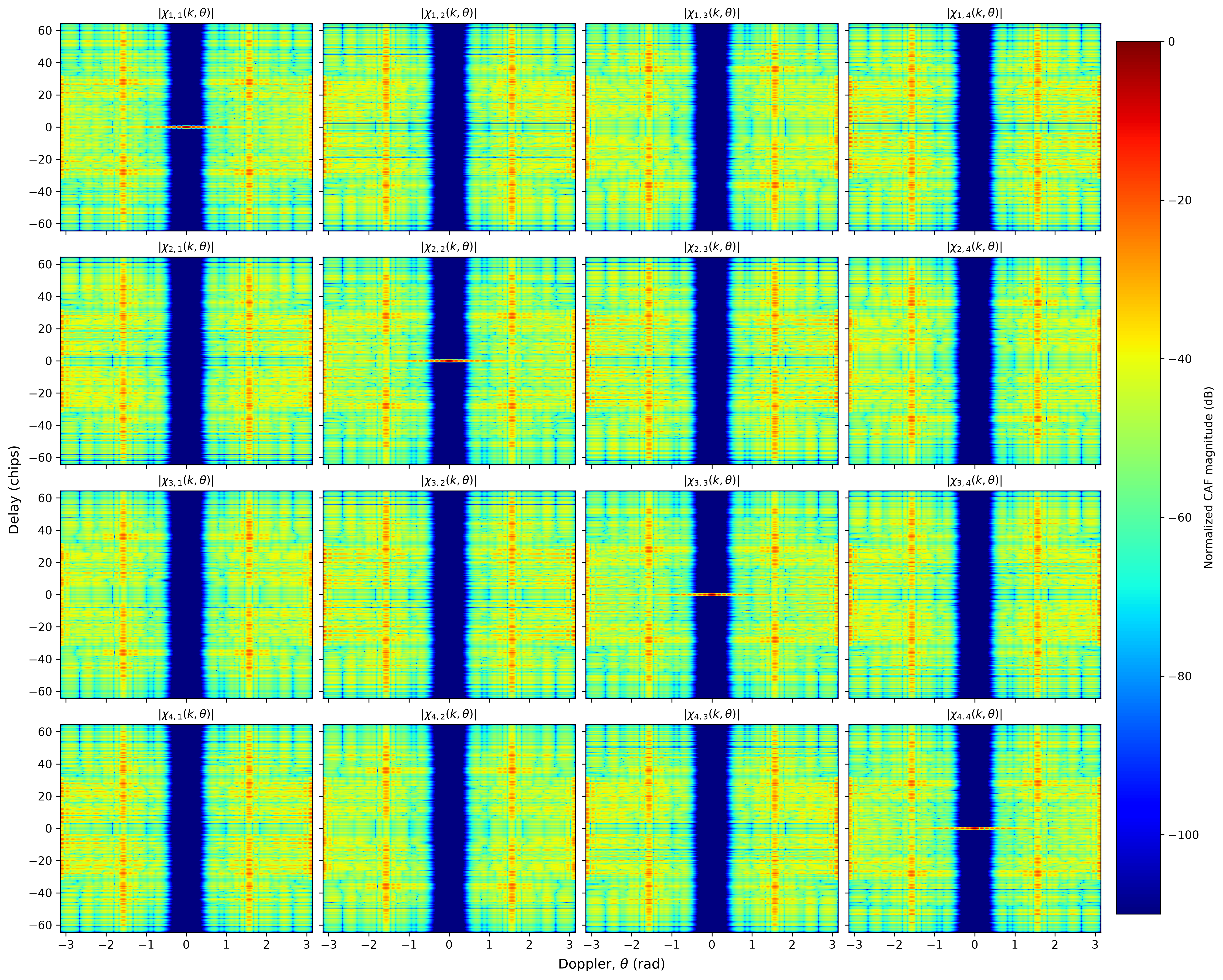}
\vspace{-1mm}
\caption{Normalized CPI ambiguity-matrix magnitudes for the $D=4$,
$N=64$, and $M=16$ design, with Doppler-clearance width $0.16\pi$
and SNR loss $-0.969$~dB.}
\label{fig:4x4_caf}
\end{figure*}

\subsection{Binary Complementary Case ($D=2$)}
\label{sec:siso_results}

Figure~\ref{fig:siso_compare} compares several representative $(P,Q)$
constructions, including the classical alternating schedule,
the PTM design and the binomial
weighting scheme, together with the proposed maximum-SNR design obtained
from the optimization framework in Section~\ref{sec:design}.

Unlike classical constructions, which correspond to fixed points along
this spectrum, the proposed design framework allows $M$ to be selected
according to the desired Doppler-clearance width, providing a flexible
SNR–robustness tradeoff.

\subsection{Four-Waveform Case ($D=4$)}
\label{sec:4x4_results}

We next consider the $D=4$ case, in which four complementary waveforms
are transmitted simultaneously. Figure~\ref{fig:4x4_caf} shows the
magnitudes $|\chi_{i,j}(k,\theta)|$ of all 16 matched-filter responses for
$N=64$ and null order $M=16$. The resulting slow-time design has an SNR
loss of $-0.969$~dB.

The same Doppler-clearance region appears in the auto-correlation responses
$\chi_{i,i}(k,\theta)$ and the cross-correlation responses
$\chi_{i,j}(k,\theta)$, $i\neq j$. From \eqref{eq:chi_split}, the common
modal spectra are controlled by the slow-time design, whereas the
corresponding mode-matrix entries determine the delay-dependent structure
of each response. Consequently, the imposed Doppler null order is inherited
by every ambiguity-matrix entry, although their sidelobe structures outside
the clearance region differ.
\begin{figure}[!h]
\centering
\includegraphics[width=\columnwidth]{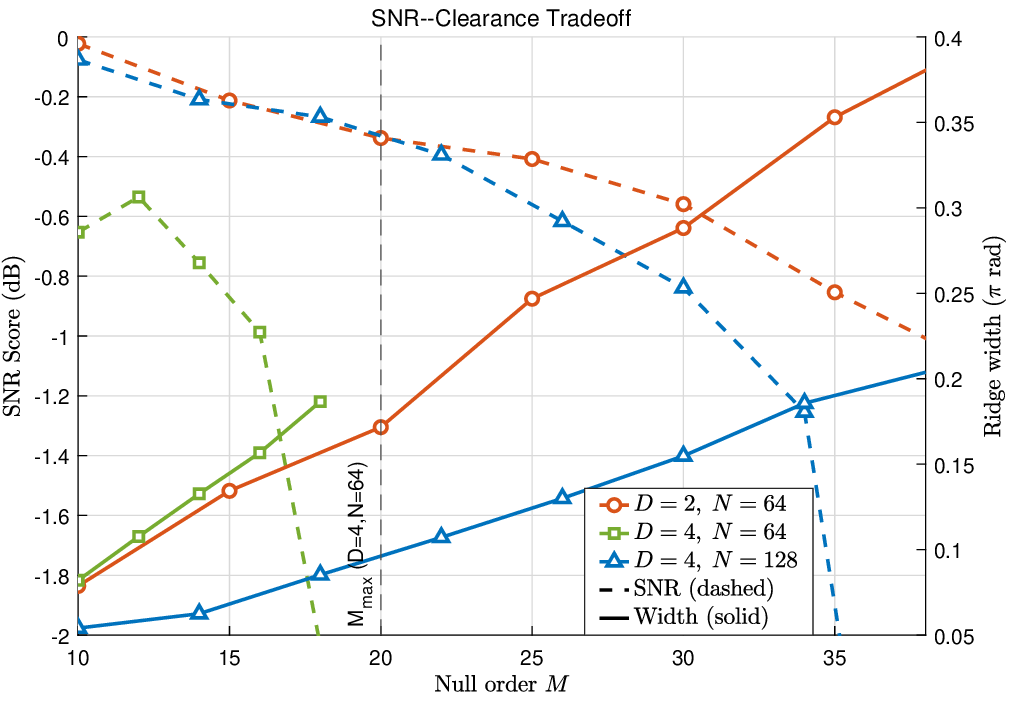}
\caption{Mainlobe SNR \eqref{eq:SNR_def} and $-90$~dB Doppler-clearance width
versus null order $M$ for $(D,N)=(2,64)$, $(4,64)$, and $(4,128)$.
}
\label{fig:sweep_tradeoff}
\end{figure}

\subsection{CPI Length, Waveform Dimension, and Doppler-Clearance Tradeoff}
\label{sec:sweep_tradeoff}

Several trends are evident. Increasing the null order $M$
widens the Doppler-clearance region but reduces coherent SNR due to
the additional slow-time constraints imposed on the waveform weights.
For a fixed CPI length $N$, increasing the waveform dimension $D$
reduces the maximum achievable null order since more sidelobe modes
must be suppressed simultaneously.



\begin{figure}[!h]
\centering
\includegraphics[width=.45\textwidth]{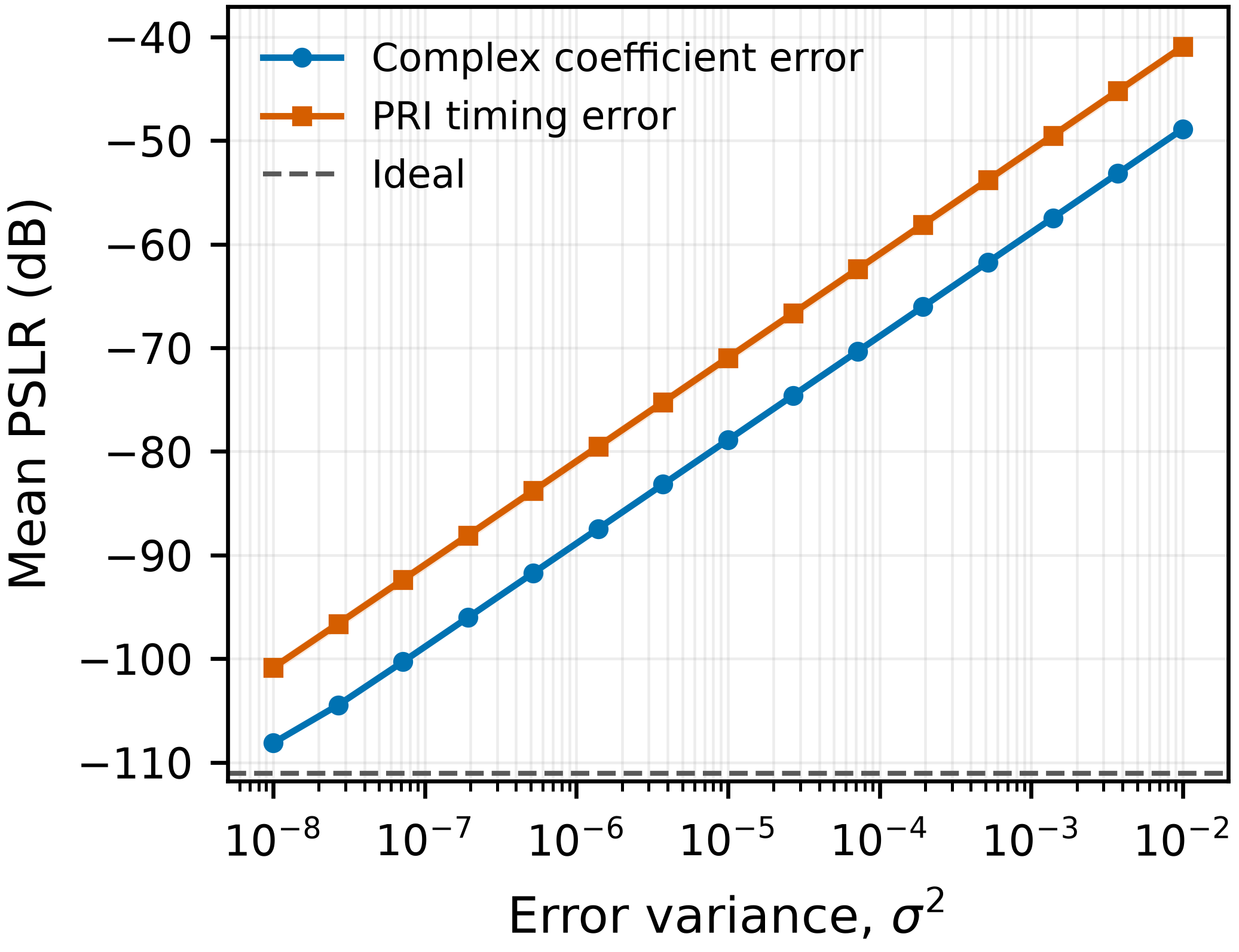}
\caption{Sensitivity to waveform-implementation errors: mean PSLR within $|\theta|\leq0.1\pi$ versus complex coefficient-error
variance and normalized PRI timing-error variance.}
\label{fig:implementation_sensitivity_a}
\end{figure}

\begin{figure}[!h]
\centering
\includegraphics[width=.45\textwidth]{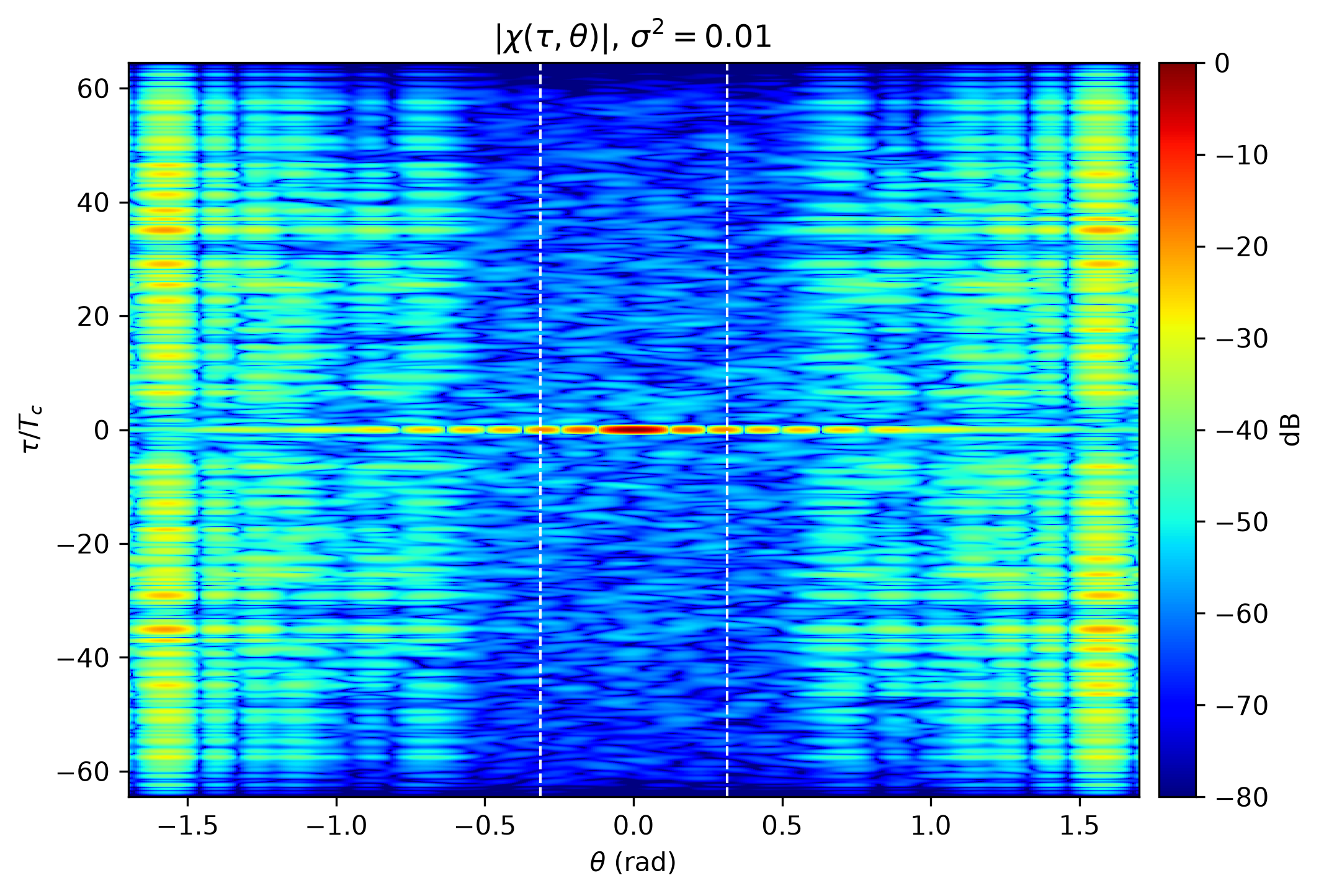}
\caption{Sensitivity to waveform-implementation errors:  normalized CAF magnitude for a
representative coefficient-error realization at $\sigma^2=10^{-2}$}\label{fig:implementation_sensitivity_b}
\end{figure}


\subsection{Sensitivity to Implementation Errors}
\label{sec:implementation_sensitivity}

The preceding numerical results assume exact realization and
synchronization of the designed waveform. Practical transmitters exhibit
amplitude, phase, and timing variations across channels, PRIs, and chips
due to power-amplifier drift, thermal variation, power-supply
fluctuations, local-oscillator instability, and converter errors \cite{ShraderGregersHansen90}. 
The resulting degradation depends on both
the perturbation magnitude and its correlation across the CPI. 

We therefore consider two representative perturbation models to illustrate
how such errors fill the designed zero-Doppler clearance region.
The simulations use $M=16$ design used in D=4 case.
First, independent complex coefficient errors are applied according to
\begin{equation}
\widetilde S_{n,i}[\ell]
=
\left(1+\sqrt{\sigma^2}\,z_{n,i}[\ell]\right)
S_{i,p_n}[\ell],
\label{eq:coefficient_perturbation}
\end{equation}
where $z_{n,i}[\ell]\sim\mathcal{CN}(0,1)$ denotes a
zero-mean, unit-variance circularly symmetric complex Gaussian random
variable, independent across PRIs, transmit channels, and chips.

Second, a common start-time offset is applied to all transmitters within
each PRI. The resulting CPI waveform is
\begin{equation}
\widetilde x_i^{(\tau)}(t)
=
\sum_{n=0}^{N-1}
x_{i,p_n}^{\mathrm{BTQ}}
\left(t-nT-\delta_n\right),
\qquad
\delta_n=T_c\sqrt{\sigma_\tau^2}\,u_n,
\label{eq:pri_timing_perturbation}
\end{equation}
where $\{u_n\}$ is obtained from independent real Gaussian draws by
removing their sample mean and normalizing their sample variance to
unity. Thus, $\sigma_\tau^2$ is the timing-error variance normalized by
$T_c^2$. Fractional offsets are evaluated on the oversampled correlation
grid. In both models, the receiver uses the nominal waveform as its
matched-filter reference.

For each value of $\sigma^2$ or $\sigma_\tau^2$, the PSLR is evaluated
over $|\theta|\leq0.1\pi$, excluding the delay mainlobe. The reported
statistic is
\begin{equation}
\overline{\mathrm{PSLR}}_{\mathrm{dB}}
=
20\log_{10}
\left[
\frac{1}{50}
\sum_{r=1}^{50}
\max_{(k,\theta)\in\mathcal S}
\frac{
\left|\widetilde\chi_r(k,\theta)\right|
}{
\left|\widetilde\chi_r(0,0)\right|
}
\right],
\label{eq:mean_sensitivity_pslr}
\end{equation}
where $\mathcal S$ contains the nonmainlobe delays and Doppler
frequencies satisfying $|\theta|\leq0.1\pi$. Thus, the linear-magnitude
PSLRs are averaged before conversion to decibels.

Figure~\ref{fig:implementation_sensitivity_a} shows the mean PSLR as each
normalized error variance is swept from $10^{-8}$ to $10^{-1}$. Both
perturbations progressively raise the sidelobe level within the clearance
region. Figure~\ref{fig:implementation_sensitivity_b} shows a representative
coefficient-error realization at $\sigma^2=10^{-2}$ and illustrates the
corresponding delay--Doppler structure of the residual sidelobes.

\section{Feasibility and Degrees of Freedom}
\label{sec:feasibility}

\subsection{CPI Length Requirements}

From \eqref{eq:A_stack_def}, $
\mathbf A(P)\in
\mathbb C^{(D-1)(M+1)\times N}$ and 
$\dim\mathcal N(\mathbf A(P))
=
N-\operatorname{rank}\mathbf A(P)$. 
Thus, the exact condition for nonzero Doppler-nulling weights is
$\operatorname{rank}\mathbf A(P)<N$. Since
$\operatorname{rank}\mathbf A(P)\leq(D-1)(M+1)$,
\begin{equation}
N>(D-1)(M+1)
\label{eq:feasibility_condition}
\end{equation}
is a schedule-independent sufficient condition. When
$\mathbf A(P)$ attains its maximum possible rank,
\eqref{eq:feasibility_condition} is also necessary and gives the
generic feasibility threshold. Each additional nonzero mode
contributes $M+1$ constraint rows, while each increase in null order
adds $D-1$ rows, reducing the available slow-time degrees of freedom
unless these rows are linearly dependent.

\subsection{Real-Valued Optimality and Computation}
\label{sec:real_form}

Although $\mathbf A(P)$ is generally complex, its conjugate-mode
structure ensures that the SNR-optimal weights are real. Define the
constraint block
$\mathbf A_r(P)\triangleq\mathbf V\mathbf D_r(P)$. And,
\[
\omega^{(D-r)p_n}
=
e^{j2\pi p_n}e^{-j2\pi r p_n/D}
=
e^{-j2\pi r p_n/D}
=
\overline{\omega^{r p_n}}.
\]
Together with the fact that $\mathbf V$ is real, this gives
\begin{equation}
\mathbf A_{D-r}(P)
=
\overline{\mathbf A_r(P)}.
\label{eq:Ar_conj_pair_short}
\end{equation}
Consequently, if
$\mathbf q\in\mathcal N(\mathbf A(P))$, then
\[
\mathbf A_r(P)\overline{\mathbf q}
=
\overline{\mathbf A_{D-r}(P)\mathbf q}
=
\mathbf 0
\]
for every nonzero mode $r$. Thus,
$\mathcal N(\mathbf A(P))$ is closed under conjugation.

Let $\mathcal S=\mathcal N(\mathbf A(P))$ and
$\widehat{\mathbf q}=\mathbf P_{\mathcal S}\mathbf 1_N$.
Because $\mathbf 1_N$ is real and $\mathcal S$ is closed under
conjugation, $\overline{\widehat{\mathbf q}}$ is also an orthogonal
projection of $\mathbf 1_N$ onto $\mathcal S$. Uniqueness of the
orthogonal projection therefore gives
$\widehat{\mathbf q}=\overline{\widehat{\mathbf q}}$, so
$\widehat{\mathbf q}\in\mathbb R^N$. Hence, allowing complex weights
provides no improvement in the SNR objective \eqref{eq:SNR_def}.

The optimal weights may therefore be computed using real arithmetic.
For $\mathbf q\in\mathbb R^N$,
$\mathbf A(P)\mathbf q=\mathbf 0$ is equivalent to
\begin{equation}
\begin{bmatrix}
\Re\{\mathbf A(P)\}\\
\Im\{\mathbf A(P)\}
\end{bmatrix}
\mathbf q
=
\mathbf 0.
\label{eq:real_embed_restricted}
\end{equation}
Moreover, \eqref{eq:Ar_conj_pair_short} makes the constraints for
$r$ and $D-r$ redundant. It therefore suffices to retain one block
from each conjugate pair, separating its real and imaginary parts.
When $D$ is even, the self-conjugate mode $r=D/2$ is real and is
retained only once. The resulting real system has
$(D-1)(M+1)$ rows, so the feasibility condition
\eqref{eq:feasibility_condition} is unchanged.

\subsection{CPI Length $N$, Null Order $M$, and Doppler Clearance Width}

We next relate the CPI length and null order to the Doppler interval over
which the range sidelobes remain suppressed. For a fixed paraunitary
waveform library, define the normalized peak sidelobe level
\begin{equation}
\mathcal{L}(\theta)
\triangleq
\frac{
\displaystyle\max_{k\neq 0}
\left|\boldsymbol{\chi}(k,\theta)\right|_2
}{
\left|\boldsymbol{\chi}(0,0)\right|_2
},
\label{eq:normalized_sidelobe_level}
\end{equation}
where $|\cdot|_2$ denotes the matrix spectral norm. For a prescribed
threshold $\varepsilon$, the Doppler-clearance half-width $\theta_w$ is
the largest value for which
$\mathcal{L}(\theta)\leq\varepsilon$ throughout
$|\theta|\leq\theta_w$.

Under \eqref{eq:moment_nulls}, the zeroth-order constraints give
$S_r(0)=0$ for $r=1,\ldots,D-1$. Hence, from
\eqref{eq:chi_split},
\begin{equation}
\boldsymbol{\chi}(0,0)
=
L(\mathbf 1_N^T\mathbf q)\mathbf I_D,
\qquad
\left\|\boldsymbol{\chi}(0,0)\right\|_2
=
L\left|\mathbf 1_N^T\mathbf q\right|.
\label{eq:mainlobe_zero}
\end{equation}

Define the waveform-dependent constant
\begin{equation}
\Gamma_{\Delta}
\triangleq
\max_{\substack{k\neq 0\\1\leq r\leq D-1}}
\left\|\boldsymbol{\Delta}_r[k]\right\|_2.
\label{eq:Gamma_Delta}
\end{equation}
For $k\neq0$, \eqref{eq:chi_split} and the triangle inequality give
\begin{equation}
\max\limits_{k\neq 0}
\left\|\boldsymbol{\chi}(k,\theta)\right\|_2
\leq
\frac{\Gamma_{\Delta}}{D}
\sum_{r=1}^{D-1}\left|S_r(\theta)\right|.
\label{eq:matrix_sidelobe_modal_bound}
\end{equation}
Consequently,
\begin{equation}
\mathcal{L}(\theta)
\leq
\frac{\Gamma_{\Delta}}{DL}
\frac{
\displaystyle\sum_{r=1}^{D-1}|S_r(\theta)|
}{
|\mathbf 1_N^T\mathbf q|
}.
\label{eq:normalized_sidelobe_modal_bound}
\end{equation}
For a fixed waveform library, $\Gamma_{\Delta}$ is independent of
$N$ and $M$. Their effect on the clearance region is therefore
controlled by the normalized modal sum in
\eqref{eq:normalized_sidelobe_modal_bound}.

\begin{theorem}[Modal Growth Bound]
Assume that $M\geq0$, $\mathbf 1_N^T\mathbf q\neq0$, and the
moment-nulling conditions \eqref{eq:moment_nulls} hold. Then, for every
$\theta$,
\begin{equation}
\frac{
\displaystyle\sum_{r=1}^{D-1}|S_r(\theta)|
}{
|\mathbf 1_N^T\mathbf q|
}
\leq
\frac{D-1}{\sqrt{\eta_{\mathrm{SNR}}}}
\sum_{m=M+1}^{\infty}
\frac{(N|\theta|)^m}{m!\sqrt{2m+1}}.
\label{eq:NM_tail_bound}
\end{equation}
\end{theorem}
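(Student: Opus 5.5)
We bound each nonzero mode separately, using the Taylor expansion of $S_r(\theta)$ and Cauchy--Schwarz on its moments. Under \eqref{eq:moment_nulls}, the expansion \eqref{eq:Sr_moment_expansion} loses all terms with $m\le M$. Hence, for each $r=1,\ldots,D-1$,
\[
|S_r(\theta)|\le\sum_{m=M+1}^{\infty}\frac{|\theta|^m}{m!}\,|\mu_{r,m}|.
\]
The expansion converges absolutely because $S_r$ is a finite trigonometric sum, so $|S_r|$ may be bounded term by term.

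For each moment we apply Cauchy--Schwarz to \eqref{eq:moments_def}. Since $|\omega^{rp_n}|=1$,
\[
|\mu_{r,m}|\le\|\mathbf q\|_2\Big(\sum_{n=0}^{N-1}n^{2m}\Big)^{1/2}.
\]
The key elementary estimate is $\sum_{n=0}^{N-1}n^{2m}\le\int_0^N x^{2m}\,dx=N^{2m+1}/(2m+1)$, which holds because $x^{2m}$ is nondecreasing on $[0,N]$. It gives $|\mu_{r,m}|\le\|\mathbf q\|_2\,N^{m}\sqrt{N}/\sqrt{2m+1}$. Substituting,
\[
|S_r(\theta)|\le\|\mathbf q\|_2\sqrt N\sum_{m=M+1}^\infty\frac{(N|\theta|)^m}{m!\sqrt{2m+1}}.
\]
This bound does not depend on $r$ or on the schedule $P$.

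Summing over the $D-1$ nonzero modes gives the factor $D-1$. It remains to divide by $|\mathbf 1_N^T\mathbf q|$, which is nonzero by hypothesis. From \eqref{eq:snr_efficiency}, $\sqrt{\eta_{\mathrm{SNR}}}=|\mathbf 1_N^T\mathbf q|/(\sqrt N\|\mathbf q\|_2)$, so $\|\mathbf q\|_2\sqrt N/|\mathbf 1_N^T\mathbf q|=1/\sqrt{\eta_{\mathrm{SNR}}}$. This yields \eqref{eq:NM_tail_bound} exactly.

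The argument has no real obstacle; the only delicate step is the power-sum bound. The integral comparison must be stated so that the constant is exactly $1/\sqrt{2m+1}$, with no extra factor. This works because the sum runs only to $N-1$, so each term $n^{2m}$ is at most $\int_n^{n+1}x^{2m}\,dx$.
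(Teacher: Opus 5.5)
Your proposal is correct and follows the paper's proof essentially step for step. You truncate the moment expansion using the nulling conditions, bound each $|\mu_{r,m}|$ by Cauchy--Schwarz together with $\sum_{n=0}^{N-1}n^{2m}\le N^{2m+1}/(2m+1)$, and convert $\sqrt{N}\|\mathbf q\|_2/|\mathbf 1_N^T\mathbf q|$ into $1/\sqrt{\eta_{\mathrm{SNR}}}$. The only cosmetic differences are that you bound each mode before summing over $r$, and that you compare with $\int_0^N x^{2m}\,dx$ directly, whereas the paper uses $\int_0^{N-1}(x+1)^{2m}\,dx=(N^{2m+1}-1)/(2m+1)$ and then drops the $-1$.
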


\begin{proof}
From \eqref{eq:Sr_moment_expansion} and
\eqref{eq:moment_nulls},
\begin{equation}
\sum_{r=1}^{D-1}|S_r(\theta)|
\leq
\sum_{m=M+1}^{\infty}
\frac{|\theta|^m}{m!}
\sum_{r=1}^{D-1}|\mu_{r,m}|.
\label{eq:modal_tail_initial}
\end{equation}
For $m\geq M+1\geq1$, the power sum satisfies
\begin{align}
\sum_{n=0}^{N-1}n^{2m}
&\leq
\int_0^{N-1}(x+1)^{2m}\,dx
\nonumber\\
&=
\frac{N^{2m+1}-1}{2m+1}
\leq
\frac{N^{2m+1}}{2m+1}.
\label{eq:power_sum_bound}
\end{align}
Cauchy--Schwarz therefore gives
\begin{align}
|\mu_{r,m}|
&\leq
\|\mathbf q\|_2
\left(\sum_{n=0}^{N-1}n^{2m}\right)^{1/2}
\nonumber\\
&\leq
\frac{N^{m+1/2}}{\sqrt{2m+1}}
\|\mathbf q\|_2.
\label{eq:mu_CS_bound}
\end{align}
Substitution into \eqref{eq:modal_tail_initial} yields
\begin{align}
\sum_{r=1}^{D-1}|S_r(\theta)|
&\leq
(D-1)\sqrt{N}\,\|\mathbf q\|_2
\nonumber\\
&\quad{}\times
\sum_{m=M+1}^{\infty}
\frac{(N|\theta|)^m}
{m!\sqrt{2m+1}}.
\label{eq:Sr_sum_bound}
\end{align}
Finally, \eqref{eq:snr_efficiency} gives
\begin{equation}
\frac{\sqrt{N}\,\|\mathbf q\|_2}
{|\mathbf 1_N^T\mathbf q|}
=
\frac{1}{\sqrt{\eta_{\mathrm{SNR}}}},
\end{equation}
which proves \eqref{eq:NM_tail_bound}.
\end{proof}

Combining \eqref{eq:normalized_sidelobe_modal_bound} and
\eqref{eq:NM_tail_bound} gives the explicit sidelobe bound
\begin{equation}
\mathcal{L}(\theta)
\leq
\frac{\Gamma_{\Delta}(D-1)}
{DL\sqrt{\eta_{\mathrm{SNR}}}}
\sum_{m=M+1}^{\infty}
\frac{(N|\theta|)^m}{m!\sqrt{2m+1}}.
\label{eq:normalized_sidelobe_growth_bound}
\end{equation}

\begin{corollary}[CPI-Length Scaling]
\label{cor:MN_scaling}
For fixed $D$, a fixed waveform library, and SNR efficiency
bounded away from zero, a null order growing linearly with
$N\theta_w$ suffices to maintain a fixed positive tolerance
in \eqref{eq:normalized_sidelobe_growth_bound} throughout
$|\theta|\leq\theta_w$. Specifically, a sufficient null order
can be chosen with
\begin{equation}
M=eN\theta_w+\mathcal O(1).
\label{eq:MN_width_scaling}
\end{equation}
\end{corollary}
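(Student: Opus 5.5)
The plan is to bound the tail in \eqref{eq:normalized_sidelobe_growth_bound} by its first term times a geometric factor, and then apply Stirling's formula to that first term. Set $x\triangleq N\theta_w$. Since the series is increasing in $|\theta|$, the worst case on $|\theta|\leq\theta_w$ is $|\theta|=\theta_w$. Dropping the factor $1/\sqrt{2m+1}\leq1$, the prefactor becomes $K\triangleq\Gamma_\Delta(D-1)/(DL\sqrt{\eta_{\mathrm{SNR}}})$. This constant is independent of $N$ and $M$ under the stated hypotheses: fixed $D$, a fixed library (so $L$ and $\Gamma_\Delta$ are fixed), and $\eta_{\mathrm{SNR}}\geq\eta_0>0$. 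It therefore suffices to choose $M$ so that
\[
T_{M}(x)\triangleq\sum_{m=M+1}^{\infty}\frac{x^m}{m!}\leq\frac{\varepsilon}{K}.
\]

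Next I bound the tail geometrically. For $m\geq M+1$, the ratio of consecutive terms is $x/(m+1)\leq x/(M+2)$. If $M+2\geq 2x$, this ratio is at most $1/2$, so $T_M(x)\leq 2\,x^{M+1}/(M+1)!$. Stirling's lower bound $n!\geq\sqrt{2\pi n}\,(n/e)^n$ then gives
\[
T_M(x)\leq\frac{2}{\sqrt{2\pi(M+1)}}\left(\frac{ex}{M+1}\right)^{M+1}.
\]

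Now choose $M+1=\lceil ex\rceil+c$ for a constant $c\geq0$ to be fixed. This also guarantees $M+2\geq 2x$, because $e>2$. Writing $n=M+1$, we have $ex/n\leq 1-c/n$, so $(ex/n)^n\leq(1-c/n)^n\leq e^{-c}$. Hence
\[
T_M(x)\leq\frac{2e^{-c}}{\sqrt{2\pi n}}\leq\sqrt{2/\pi}\,e^{-c}.
\]
Taking $c=\max\{0,\lceil\ln(\sqrt{2/\pi}\,K/\varepsilon)\rceil\}$ gives $T_M(x)\leq\varepsilon/K$ uniformly in $N$ and $\theta_w$. The resulting choice is $M=eN\theta_w+\mathcal O(1)$, where the $\mathcal O(1)$ term depends only on $\varepsilon$, $D$, $L$, $\Gamma_\Delta$ and $\eta_0$. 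This is \eqref{eq:MN_width_scaling}.

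The only delicate step is making the geometric-tail and Stirling estimates uniform. If $c$ were allowed to depend on $x$, the conclusion would not be $\mathcal O(1)$. The choice $n\geq ex$ handles this in two ways: the ratio condition $n+1\geq 2x$ holds automatically because $e>2$, and $(1-c/n)^n\leq e^{-c}$ holds for every $n$. So nothing needs to be tuned in terms of $x$.

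I would also remark that the $\sqrt{2m+1}$ factor only helps, and that the leading constant $e$ is essentially sharp for this bound. The reason is that the first tail term alone, $x^{M+1}/(M+1)!$, grows without bound whenever $M+1\leq(e-\delta)x$ and $x\to\infty$.

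Finally, I would note a caveat that needs no extra work here. Feasibility, via \eqref{eq:feasibility_condition}, additionally requires $N>(D-1)(M+1)$, which restricts $\theta_w\lesssim 1/(e(D-1))$. The corollary asserts only sufficiency of the null order for the tolerance, given that such weights with $\eta_{\mathrm{SNR}}$ bounded below exist.
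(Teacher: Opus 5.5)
Your proof is correct, but it takes a different route from the paper's. You dominate the tail geometrically, using the ratio bound $x/(m+1)\le 1/2$, which your choice $M+1\ge ex$ secures because $e>2$. You then bound the leading term with Stirling's lower bound on $(M+1)!$, which gives $\sqrt{2/(\pi n)}\,(ex/n)^{n}$ with $n=M+1$. The paper instead uses a one-line exponential-tilting (Chernoff-type) step. For $m\ge M+1$ it multiplies each term by $e^{m-(M+1)}\ge 1$ and sums the complete exponential series, obtaining $\sum_{m\ge M+1}(N|\theta|)^m/m!\le\exp\bigl(eN|\theta|-(M+1)\bigr)$ for every $M$ and $\theta$. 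This needs no side condition and no Stirling estimate, and the sufficient condition $M+1\ge eN\theta_w+\log(K/\varepsilon)$, with your prefactor $K$, follows immediately. Your route is longer but slightly sharper. The quantity $(ex/n)^n$ is what tilting produces with the optimal tilt $n/x$ instead of the paper's fixed tilt $e$; indeed $(ex/n)^n\le e^{ex-n}$. You also gain the extra factor $\sqrt{2/(\pi n)}$, and your route shows where the constant $e$ comes from, which is what supports your sharpness remark. Neither refinement changes the conclusion: both routes give the leading term $eN\theta_w$ plus a logarithmic additive constant. One wording point: $K$ is not literally independent of $N$ and $M$, because $\eta_{\mathrm{SNR}}$ depends on the designed weights. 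To avoid circularity, define $c$ through $K_0=\Gamma_\Delta(D-1)/(DL\sqrt{\eta_0})\ge K$, as your final sentence implicitly does; the paper's logarithmic term has the same feature.
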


\begin{proof}
For $m\geq M+1$, we have $e^{m-(M+1)}\geq1$, so
\[
\frac{(N|\theta|)^m}{m!}
\leq
e^{m-(M+1)}\frac{(N|\theta|)^m}{m!}
=
e^{-(M+1)}\frac{(eN|\theta|)^m}{m!}.
\]
Consequently,
\begin{align*}
\sum_{m=M+1}^{\infty}
\frac{(N|\theta|)^m}{m!\sqrt{2m+1}}
&\leq
\sum_{m=M+1}^{\infty}\frac{(N|\theta|)^m}{m!}\\
&\leq
e^{-(M+1)}
\sum_{m=M+1}^{\infty}\frac{(eN|\theta|)^m}{m!}\\
&\leq
e^{-(M+1)}
\sum_{m=0}^{\infty}\frac{(eN|\theta|)^m}{m!}\\
&=
\exp\!\bigl(eN|\theta|-(M+1)\bigr).
\end{align*}
Substituting into
\eqref{eq:normalized_sidelobe_growth_bound}, a tolerance
$\varepsilon>0$ is therefore guaranteed throughout
$|\theta|\leq\theta_w$ whenever
\[
M+1\geq eN\theta_w+
\log\!\left(
\frac{\Gamma_{\Delta}(D-1)}
{\varepsilon DL\sqrt{\eta_{\mathrm{SNR}}}}
\right).
\]
The logarithmic term remains bounded under the stated
assumptions. Choosing the smallest nonnegative integer $M$
satisfying this inequality gives \eqref{eq:MN_width_scaling}.
\end{proof}

\begin{figure}[!th]
\centering
\includegraphics[width=0.85\columnwidth]{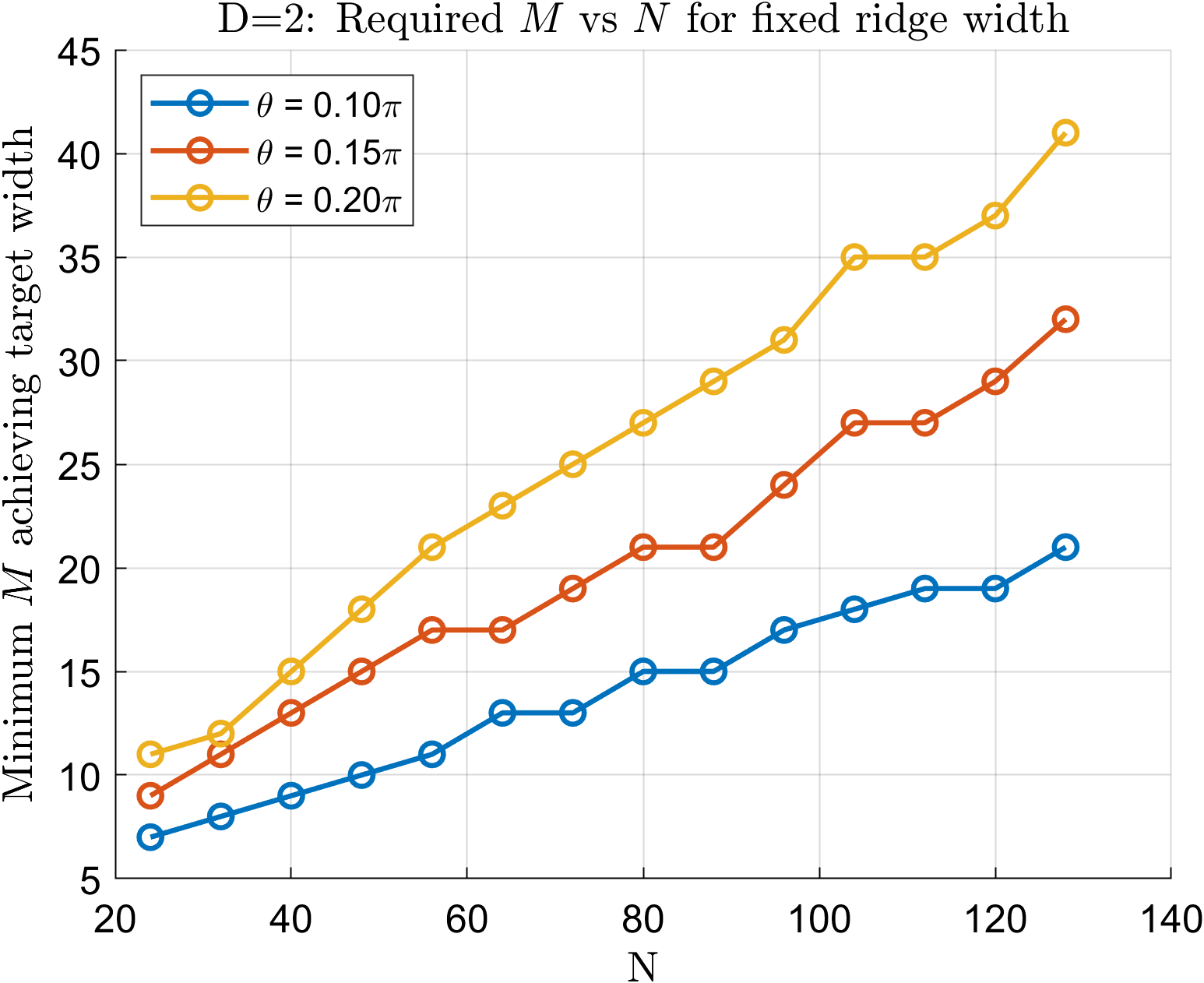}
\caption{Minimum null order required to maintain prescribed
Doppler-clearance half-widths
$\theta_w\in{0.10\pi,0.15\pi,0.20\pi}$ as the CPI length $N$
increases. Each point is obtained from CAF simulation using the
smallest $M$ satisfying the prescribed clearance criterion. For each
fixed $\theta_w$, the approximately linear growth supports
$M_{\min}(N;\theta_w)\approx c(\theta_w)N$; the different slopes
reflect the dependence of $c(\theta_w)$ on the desired clearance
width.}
\label{fig:MN_scaling}
\end{figure}

\begin{corollary}[Waveform-Dimension Tradeoff]
Suppose that $\mathbf A(P)$ attains its maximum possible rank.
A nontrivial null space then requires
\begin{equation}
M+1<\frac{N}{D-1},
\label{eq:M_dimension_limit}
\end{equation}
and hence
\begin{equation}
\frac{M}{N}
=
\mathcal O\!\left(\frac{1}{D-1}\right).
\label{eq:M_dimension_scaling}
\end{equation}
Combining \eqref{eq:M_dimension_limit} with
\eqref{eq:MN_width_scaling}, the clearance half-width
certified by Corollary~\ref{cor:MN_scaling} satisfies
\begin{equation}
\theta_w
=
\mathcal O\!\left(\frac{1}{D-1}\right),
\label{eq:dimension_width_scaling}
\end{equation}
provided the additive term in \eqref{eq:MN_width_scaling}
remains uniformly bounded as $D$ varies.
\end{corollary}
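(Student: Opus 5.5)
The plan is to combine the rank count of Section~\ref{sec:feasibility} with Corollary~\ref{cor:MN_scaling}; no new analysis should be needed.

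\textbf{Step 1: the null-order limit.} Under the maximal-rank assumption, $\operatorname{rank}\mathbf A(P)=\min\{(D-1)(M+1),N\}$. A nontrivial null space $\mathcal N(\mathbf A(P))$ requires $\operatorname{rank}\mathbf A(P)<N$. Since $\mathbf A(P)$ has full (maximal) rank, this forces $(D-1)(M+1)<N$. If instead $(D-1)(M+1)\ge N$, the rank would equal $N$ and the null space would be trivial. Dividing by $D-1\ge1$ gives \eqref{eq:M_dimension_limit}. This is exactly the necessity direction already noted after \eqref{eq:feasibility_condition}.

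\textbf{Step 2: the ratio $M/N$.} From \eqref{eq:M_dimension_limit}, $M<N/(D-1)-1<N/(D-1)$, hence $M/N<1/(D-1)$. This gives \eqref{eq:M_dimension_scaling} with implied constant $1$.

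\textbf{Step 3: the certified half-width.} Corollary~\ref{cor:MN_scaling} certifies a half-width $\theta_w$ whenever
\[
M+1\ge eN\theta_w+\log\!\left(\frac{\Gamma_\Delta(D-1)}{\varepsilon DL\sqrt{\eta_{\mathrm{SNR}}}}\right),
\]
that is, with $M=eN\theta_w+c$ where $c$ is the additive $\mathcal O(1)$ term. For this certification to be available with a nontrivial weight vector, the same $M$ must also satisfy Step~1. Substituting gives
\[
eN\theta_w+c+1<\frac{N}{D-1}.
\]
Solving for $\theta_w$ yields
\[
\theta_w<\frac{1}{e(D-1)}-\frac{c+1}{eN}.
\]

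\textbf{Step 4: uniformity in $D$.} Under the proviso that $|c|\le C$ uniformly in $D$, the second term is bounded by $(C+1)/(eN)$. This shows $\theta_w\le \frac{1}{e(D-1)}+\frac{C+1}{eN}$, which is $\mathcal O(1/(D-1))$, giving \eqref{eq:dimension_width_scaling}.

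\textbf{Main obstacle.} The delicate point is the term $\frac{C+1}{eN}$ in Step~4, which is not automatically $\mathcal O(1/(D-1))$ for large $D$. To handle it, I would use the feasibility requirement itself: it needs $N>(D-1)(M+1)\ge D-1$ since $M\ge0$, so $1/N<1/(D-1)$. The correction term is therefore also $\mathcal O(1/(D-1))$, and the bound holds uniformly.

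I would also note that the logarithmic term depends on $D$ only through $\log\frac{D-1}{D}$ and through $\Gamma_\Delta$. Hence the proviso in the statement is a genuine assumption about the library family, not something to prove.
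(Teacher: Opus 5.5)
Your proposal is correct and follows the same route as the paper, whose argument is embedded in the corollary itself: the maximal-rank count forces $(D-1)(M+1)<N$, and substituting $M=eN\theta_w+\mathcal O(1)$ from Corollary~\ref{cor:MN_scaling} then gives $\theta_w<\frac{1}{e(D-1)}+\mathcal O(1/N)$. Your use of $N>(D-1)(M+1)\ge D-1$ to absorb the $\mathcal O(1/N)$ remainder into $\mathcal O(1/(D-1))$ makes explicit a step the paper leaves implicit; the one slip is in your closing aside, where the logarithmic term also depends on $D$ through $\eta_{\mathrm{SNR}}$, $L$, and the library, not only through $(D-1)/D$ and $\Gamma_\Delta$, which only reinforces that the proviso is an assumption.
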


\section{Conclusion}

This paper developed a framework for Doppler-resilient coordination
of paraunitary MIMO waveform libraries. The modal decomposition
reduces prescribed-order Doppler nulling to linear constraints on
the slow-time receive weights. For a fixed schedule, subspace
projection gives the SNR-optimal weights, leaving a discrete
schedule search to complete the design.

The analysis relates CPI length, null order, and waveform dimension
to the available degrees of freedom and a sufficient bound on
Doppler clearance. Numerical examples demonstrate sidelobe
suppression across the MIMO ambiguity matrix and illustrate the
clearance--SNR tradeoff. These results enable Doppler resilience
through slow-time coordination of a fixed complementary waveform
library.

\appendices

\section{Lifting-Based Paraunitary Waveform Library Construction}
\label{app:PUconstruction}

The biphase libraries used in the numerical examples are generated
using a specialization of the paraunitary construction in
\cite{BudiSinSpasojevic}. Initialize
$\mathbf S^{(0)}(z)=\mathbf U^{(0)}$, where $\mathbf U^{(0)}$
is an unnormalized Sylvester Hadamard matrix satisfying
$\mathbf U^{(0)}\mathbf U^{(0)H}=D\mathbf I_D$.

At stage $s$, partition the rows into $D/2$ disjoint pairs.
Let $\mathbf P^{(s)}$ place the paired rows in adjacent positions,
and assign each pair a sign $\sigma_{s,r}\in\{-1,1\}$. Define
\begin{equation}
\mathbf T_{2,r}^{(s)}(z)
=
\begin{bmatrix}
1 & \sigma_{s,r}z^{-1}\\
1 & -\sigma_{s,r}z^{-1}
\end{bmatrix},
\label{eq:pairwise_lifting_factor}
\end{equation}
and assemble the stage matrix as
\begin{equation}
\mathbf T^{(s)}(z)
=
\mathbf P^{(s)T}
\operatorname{blkdiag}\!\left(
\mathbf T_{2,1}^{(s)}(z),\ldots,
\mathbf T_{2,D/2}^{(s)}(z)
\right)
\mathbf P^{(s)}.
\label{eq:stage_lifting_matrix}
\end{equation}
The recursion is
\begin{equation}
\mathbf S^{(s)}(z)
=
\mathbf T^{(s)}(z)\mathbf S^{(s-1)}(z^2),
\qquad s=1,\ldots,K.
\label{eq:PU_lifting_recursion}
\end{equation}
Each stage interleaves signed copies of the preceding row
coefficients, preserving the biphase alphabet and doubling
the sequence length.

Since each stage satisfies
$\mathbf T^{(s)}(z)\mathbf T^{(s)H}(z^{-1})=2\mathbf I_D$,
the resulting library $\mathbf S(z)=\mathbf S^{(K)}(z)$
has length $L=2^K$ and satisfies
\begin{equation}
\mathbf S(z)\mathbf S^H(z^{-1})
=
D2^K\mathbf I_D
=
DL\mathbf I_D,
\label{eq:PU_final_normalization}
\end{equation}
as required by \eqref{eq:PU_time}. The Sylvester Hadamard
initialization restricts $D$ to a power of two, with $D\ge2$.

\section{BTQ Realization and Complementary Correlation}
\label{app:MSKrealization}

For the biphase library used in the simulations, the BTQ realization
can be expressed using a common half-sine pulse,
\begin{equation}
h(t)=
\begin{cases}
\sin\!\left(\dfrac{\pi t}{T_c}\right), & 0\le t<T_c,\\
0, & \text{otherwise},
\end{cases}
\label{eq:btq_pulse}
\end{equation}
where $T_c$ denotes the paired-chip interval. The BTQ realization
in \eqref{eq:btq_waveform} places the even-indexed chips on the
sine rail and the odd-indexed chips on the cosine rail, offset
by $T_c/2$ \cite{TaylorBlinchikoff88}. On their respective
pulse supports,
\[
\begin{aligned}
\sin\!\left(\frac{\pi t}{T_c}\right)
&=(-1)^r h(t-rT_c),\\
\cos\!\left(\frac{\pi t}{T_c}\right)
&=(-1)^{r+1}h\!\left(t-\left(r+\frac12\right)T_c\right).
\end{aligned}
\]
Including the factor $-j$ on the sine rail, the pulse
coefficients are therefore $(-j)^{2r+1}\mathbf{s}_d[2r]$
and $(-j)^{2r+2}\mathbf{s}_d[2r+1]$, respectively.
Combining the even and odd indices gives
\begin{equation}
\mathbf{x}_d^{\mathrm{BTQ}}(t)
=
-j\sum_{\ell=0}^{L-1}
(-j)^\ell\mathbf{s}_d[\ell]
h\!\left(t-\frac{\ell T_c}{2}\right).
\label{eq:btq_pulse_expansion}
\end{equation}

Define the pulse autocorrelation by
\[
R_h(\tau)\triangleq
\int h(t)h^*(t-\tau)\,dt.
\]
Expanding the correlation of \eqref{eq:btq_pulse_expansion},
grouping terms by $k=\ell-m$, and using \eqref{eq:Cd_def} yields
\begin{equation}
\begin{aligned}
&\int
\mathbf{x}_d^{\mathrm{BTQ}}(t)
\mathbf{x}_d^{\mathrm{BTQ},H}(t-\tau)\,dt\\
&\quad=
\sum_{\ell,m}
(-j)^\ell j^m
\mathbf{s}_d[\ell]\mathbf{s}_d^H[m]\,
R_h\!\left(\tau-\frac{(\ell-m)T_c}{2}\right)\\
&\quad=
\sum_k(-j)^k\mathbf{C}_d[k]
R_h\!\left(\tau-\frac{kT_c}{2}\right).
\end{aligned}
\label{eq:btq_column_correlation}
\end{equation}
Thus, the discrete correlation matrices appear as phase-weighted
coefficients of shifted pulse autocorrelations.

Summing \eqref{eq:btq_column_correlation} over the library columns
and applying \eqref{eq:comp_set} gives
\begin{equation}
\begin{aligned}
&\sum_{d=0}^{D-1}\int
\mathbf{x}_d^{\mathrm{BTQ}}(t)
\mathbf{x}_d^{\mathrm{BTQ},H}(t-\tau)\,dt\\
&\quad=
\sum_k(-j)^k
\left(\sum_{d=0}^{D-1}\mathbf{C}_d[k]\right)
R_h\!\left(\tau-\frac{kT_c}{2}\right)\\
&\quad=
DL\,R_h(\tau)\mathbf{I}_D.
\end{aligned}
\label{eq:btq_complementarity}
\end{equation}
Hence, complementary combining cancels all off-diagonal channel
responses and leaves only the common pulse autocorrelation on
the diagonal. Since $R_h(\tau)=0$ for $|\tau|\ge T_c$, the
summed response has no delay sidelobes outside this mainlobe.

\bibliographystyle{IEEEtran}
\bibliography{DoppRef.bib}
\end{document}